\documentclass[lettersize,conference]{IEEEtran}

\ifCLASSINFOpdf
  \usepackage[pdftex]{graphicx}
  \graphicspath{{./}{../pdf/}{../jpeg/}}
  \DeclareGraphicsExtensions{.pdf,.jpeg,.png}
  \usepackage[caption=false, font=footnotesize]{subfig}
\else
\fi

\usepackage{amsmath}

\interdisplaylinepenalty=2500
\usepackage[cmintegrals]{newtxmath}

\usepackage{algorithmicx}
\usepackage{algorithm}
\usepackage{algpseudocode}
\makeatletter
\newcommand{\removelatexerror}{\let\@latex@error\@gobble}
\makeatother


\newtheorem{theorem}{Theorem}

\usepackage{cite}

\usepackage{tabularx}
\usepackage{booktabs}

\usepackage{color,soul}
\setulcolor{red}
\sethlcolor{yellow}
\usepackage{xcolor}

\renewcommand\hl[1]{#1} 
\usepackage{makecell}

\usepackage{amsthm,newtxmath}
\newtheorem{lemma}[theorem]{Lemma}
\newtheorem{prop}{Proposition}
\newtheorem{corollary}{Corollary}

\hyphenation{op-tical net-works semi-conduc-tor IEEE-Xplore}

\begin{document}
\bstctlcite{IEEEexample:BSTcontrol}

\title{A Cross-Layer Descent Approach for Resilient Network Operations of Proliferated LEO Satellites}

\author{\IEEEauthorblockN{Peng Hu}
\IEEEauthorblockA{
\IEEEauthorblockA{\textit{National Research Council of Canada}
}
\IEEEauthorblockA{\textit{1200 Montreal Road. Ottawa, ON K1A 0R6. Canada}
}\\
}
\IEEEauthorblockA{ \vspace{-8pt}
\IEEEauthorblockA{\textit{The Cheriton School of Computer Science, University of Waterloo}
\IEEEauthorblockA{\textit{200 University Ave W., Waterloo, ON N2L 3G1. Canada}}}
}
}

\markboth{Journal of \LaTeX\ Class Files,~Vol.~00, No.~0, August~2022}%
{Shell \MakeLowercase{\textit{et al.}}: A Sample Article Using IEEEtran.cls for IEEE Journals}


\maketitle

\begin{abstract} 
With the proliferated low-Earth-orbit (LEO) satellites in mega-constellations, the future Internet will be able to reach any place on Earth, providing high-quality services to everyone. However, high-quality operations in terms of timeliness and resilience are lacking in the current solutions. This paper proposes a multi-layer networking approach called ``Cross-Layer Descent (CLD)''. Based on the proposed system model, principles, and measures, CLD can support foundational services such as telecommand (TC) transmissions for various network operation missions for LEO satellites compliant with the Consultative Committee for Space Data Systems (CCSDS) standards. The CLD approach enhances timing and resilience requirements using advanced communication payloads. From the simulation-based analysis, the proposed scheme outperforms other classical ones in resilience and latency for typical TC missions. The future work and conclusive remarks are discussed at the end.

\end{abstract}

 \begin{IEEEkeywords}
 Satellite networks, satellite mega-constellations, telemetry, telecommand, network operations 
 \end{IEEEkeywords}

\section{Introduction}
\IEEEPARstart{W}{ith} the increasing launches of advanced low-Earth-orbit (LEO) satellites, the proliferated space assets will become an essential part of our future Internet. However, the operation of the LEO satellite in large and mega-constellations faces challenges due to the dynamics and complexity of the LEO satellite networks. The traditional way of contacting an individual satellite through limited access opportunities from a ground satellite operations center for Telemetry, Tracking, and Control (TT\&C) missions cannot meet the timing requirements for a large constellation of LEO satellites ranging from hundreds to thousands of small satellites. For example, if telecommand messages from a ground station (GS) to a set of target LEO satellites for updating radio resource allocation or routing mechanisms cannot be delivered within a stringent deadline, ground users would suffer from a reduced quality of service or even network outage. A missed or lost contact with the target LEO satellite also reduces resilience and raises security threats to the network operation. Therefore, an enhanced satellite network operations (SatNetOps) approach is required to address these issues. 

In the space industry, a geostationary (GEO) satellite network for replaying LEO satellite messages has recently been considered. Thanks to the constant availability of GEO satellites to a ground station (GS), GEO satellites can provide a service to relay messages to LEO satellites. Inmarsat reportedly started a new service called Inter-satellite Data Relay System (IDRS) in late 2020 for this service with the new GEO fleet equipped with the Ka/L-band payloads to support this, aligned with a vision of a deeper integration of GEO and non-geostationary (NGSO) satellites. However, as GEO satellites stay at a high altitude, the communication between a GEO satellite and an LEO satellite or GS would face efficiency and timing challenges. There are no further details on the technical design or efficiency analysis available. 

The inter-satellite links (ISLs) within a satellite constellation have been considered to relay the messages to a destination satellite in the delay/disruption-tolerant network (DTN) \cite{Shi17, NASA2021}. In this context, a few satellites are considered relay nodes to relay messages from source to destination spacecraft. A recent view of using multi-layer networking (MLN) for space communications using ISLs has been mentioned in \cite{Nishiyama13, Pachler21} to potentially improve data traffic. Recent solutions of the MLN-based schemes aim to improve the communication capacity and global navigation satellite system (GNSS) performance \cite{Ge2021, Jiang20_MLSN, Racelis2019}. The MLN schemes for SatNetOps have been recently explored in our previous works \cite{Hu2022_SatNetOps, Hu2022_LATINCOM}. However, the schemes do not discuss the framework for using more than three layers of satellite networks, theoretical discussion and route optimization for the cross-layer mechanism in MLN-based SatNetOps methods. 

To address the challenges mentioned above, we propose a novel MLN-based approach called ``cross-layer descent'' (CLD) considering multiple NGSO and GEO satellite constellations. The main contributions of the paper are summarized below:
\begin{itemize}
    \item The proposed CLD approach can address the timing and resilience challenges of the proliferated LEO satellite networks.
    \item A new MLN model with proposed satellite roles and resilience measures is formulated for generic MLN-based SatNetOps schemes. This allows the integration of multiple layers in NGSO satellite networks for designing optimal MLN-based SatNetOps schemes.
    \item The CLD approach is evaluated considering typical commercial satellite constellations and real-world configurations and parameters, where the proposed approach has shown effectiveness with the existing methods.
\end{itemize}

The remainder of the paper is structured as follows. The related work is discussed in Section II; the system model is discussed in Section III; the proposed CLD approach is discussed in Section IV; the evaluation of the proposed approach using real-world and planned satellite constellations is performed in Section V; and the conclusion and future work are discussed in Section VI.

\section{Related Work}

The use of MLN considering the LEO, MEO, and GEO layers has been discussed in DTN. The inter-layer contact selection problem based on contact-graph routing (CGR) is discussed in \cite{Shi17}, where the so-called ``space-time graph'' consisting of a series of static graphs at discrete instants is used in the system model. However, the timing requirements, such as real-time communication, are not aligned with the DTN objectives. Apart from DTN, the discussion on MLN with the recent launches of LEO satellite constellations is focused on solving the challenges of capacity improvements with the consideration of limited radio/compute resources on satellites. A Q-learning-based capacity allocation scheme with the consideration of a 3-layer network consisting of the LEO, MEO, and GEO constellations are considered in \cite{Jiang20_MLSN}, which does not consider the GS access and SatNetOps tasks. Besides, the use of multiple LEO satellite constellations has been considered to enhance the performance of a GNSS \cite{Racelis2019}, where MEO and LEO constellations can help improve spatial diversity by providing enhanced navigation performance for areas with poor position fixes. Recently, the proposed MLN-based SatNetOps methods have shown significantly improved performance compared to the traditional operational approach based on contact opportunities between a GS and a target spacecraft. In our earlier work \cite{Hu2022_SatNetOps, Hu2022_LATINCOM}, the MLN-based SatNetOps methods considering LEO/MEO/GEO layers are discussed with feasibility analysis, algorithmic and experimental validation. However, a detailed analysis for an optimal SatNetOps scheme has not been addressed.

The space industry and agencies have seen real-world systems using GEO satellites to relay messages due to GEO satellites' consistent availability to lower-orbit space assets and ground facilities. Inmarsat has recently provided a GEO-LEO relay service aligned with its view of an ``ORCHESTRA'' system where the NGSO and GEO satellites can ultimately be integrated as a system. The new GEO relaying service leveraging its new GEO satellite fleet with compatible payloads with the modern LEO satellites started in late 2020. In addition, NASA's Tracking and Data Relay Satellite (TDRS) system \cite{NASA2021} has used GEO satellites with multiple antennas for Ku/Ka/S-band communications to relay information for over 25 space missions, including the International Space Station and the tracking for the launch of James Webb Space Telescope. However, TDRS may well guide the capability of satellite payloads for MLN communication, but it can hardly provide system-wide design direction as it has different purposes than a SatNetOps mission.

The Consultative Committee for Space Data Systems (CCSDS) is a major organization for satellite operations standards, where the fundamental telemetry (TM) and telecommand (TC) missions are covered. The CCSDS' recent issue of TC space data link protocol (TC-SDLP) \cite{CCSDS232.0-B-4} covers the typical service support for TC missions, i.e., sequence-controlled and expedited services for various operations missions, where both services use the same data transfer process but the Automatic Repeat Request (ARQ) based flow control is only available in sequence-controlled service. The timing rules and MLN are essential to the MLN-based schemes for SatNetOps in compliance with CCSDS TM/TC standards; however, they are not mentioned in the standards efforts. 

In summary, the theoretical analysis and schemes for optimal MLN-based SatNetOps solutions for proliferated LEO satellites using various layers have not been addressed in the literature. 

\section{System Model}

Considering the inconsistent use of the layer concept in the literature, we define a layer as a conceptual sphere in the entire satellite network under consideration where an orbital shell of a satellite with the same altitude is located. For example, we have two layers if an entire satellite network has one LEO satellite constellation with two orbital shells. An upper-layer satellite is at a higher altitude than a lower-layer satellite. Let us then formulate the MLN model in the subsequent sections.

\subsection{MLN Model}
Suppose all satellite nodes in the $u$-th satellite network layer are in $S_{u} \subseteq S$, where the entire set of satellites is denoted by $S$ in a three-dimensional (3D) space. Within $S_u$ all satellite nodes are in $S_{u}$, where $u \in \mathbb{Z}^{+}$, and the number of layers is $n_u$ and the number of satellites at the $u$th layer is $n_{S_{u}}$. The $i$-th satellite and the $u$-th layer is $s_{u}(i)$. At the layer $u$, the accessible neighbouring satellites to $s_{u}(i)$ are denoted by $\mathcal{A}_{u}(i)=\{ s_{u}(\kappa_{u}(i,1)), s_{u}(\kappa_{u}(i,2)), ..., s_{u}(\kappa_{u}(i,k))\}$, where $\kappa(\cdot)$ is the mapping function to get the satellite ID number, and $k$ is the number of neighbouring nodes. The value of $u$ is ordered sequentially by altitude from low to high. The satellite number is incremented by one from $u=1$. For example, the satellites in $S_{1}$ is at a lower altitude than those in $S_{2}$. Thus, cross-layer neighboring nodes of $s_{u}(i)$ at immediate lower-and higher-altitude layers are $\mathcal{A}'_{u-1}(i)$ and $\mathcal{A}'_{u+1}(i)$, if $ u \geq 1$ and $u+1 \leq n_u$.

In practice, for satellites at the same layer, a typical case for the satellite network design can ensure there are four neighbouring nodes (two intra-plane nodes and two inter-plane nodes) with bi-directional ISLs, where $k=4$ based on a minimum elevation $\omega_{u}$ used for all satellites at the $u$th layer. In this case, for $s_{u}(i)$,  $\kappa(i,j) = i \pm ((j \cdot (\sigma_{u}+1) - 1)), ~ j \in \{0,1\}$, then we have $ \mathcal{A}_{u}(i)=\{s_{u}(i-1), s_{u}(i+1), s_{u}(i-\sigma_{u}), s_{u}(i+\sigma_{u})\}$, where $\sigma_{u,v}$ is the number of satellites in an ordinal plane of a layer. The available links to cross-layer nodes depend on the configuration of a satellite. Due to the size, weight, and power (SWaP) considerations, $s_u(i)$ may be equipped with antennas pointing to the lower altitude. 

To facilitate our discussion, we make further assumptions on the links based on the latest developments \cite{Sedin20, Park19_NASA}. For the inter-satellite links (ISLs), we assume all satellites can perform ISL communication with either free-space optical (FSO) and radio-frequency (RF) payloads. The cross-layer ISL happens from upper-layer satellites to lower-altitude satellites, where the upper-layer satellites are based on a minimum elevation $\omega_0$. For the ground-satellite links, the RF link and RF/FSO hybrid link \hl{following a certain ratio} are considered. Specifically, the Ka-band RF and 1550 nm wavelength FSO links are considered. These basic assumptions can facilitate our later discussion.

\subsection{Path Length in the MLN scheme}
Since the path length is an important variable for the MLN-based scheme for SatNetOps, different path segments need to be considered. Although the overall path length from the GS to a target LEO satellite $D$ is specific to the routes at the layers used, we can generally view $D$ from the segments of the overall path and express $D$ in (\ref{Eq:path_length}) as:

\begin{equation}\label{Eq:path_length}
D   = \sum_{i=1}^{H} d(i) 
    = \sum_{m=1}^{M}\sum_{u \in L_u(m)}^{} d_u(m)
\end{equation}

where $d(i)$ is the $i$th hop length, and $d_u(m)$ denotes the length of the path segment $m$ at the layer $u$. $H$ is the hop count, $m \leq M$ is the number of path segments (e.g., $M=3$), and $L_{u}(m)$ is the set of layers involved in the $m$-th path segment. Based on (\ref{Eq:path_length}), we can also see to minimize $D$, we need to minimize each path segment $d_u(m)$.

\subsection{Latency Measure} 

For the timing performance measure, we use the standard latency components. Let us let the overall latency of a TC transfer mission be $T_D$, consisting of  the propagation delay, $T_{D_{pa}}$, transmission delay, $T_{D_{t}}$, processing delay, $T_{D_{pc}}$, and queuing delay, $T_{D_{q}}$. If we let the size of a TC packet be $M$, $T_{D_{q}}$ per satellite can be assumed to be a small constant, $\eta$. Similarly, $T_{D_{pc}}$ per satellite can be assumed to be a constant, $\varepsilon$, which can incorporate the factor of switching delay of a satellite payload. For each hop $i$, we let the data rate be $r(i)$ and $T_{D_t}=M/r(i)$, where $r(i)$ equals $r_{RF}$ or $r_{FSO}$, representing either a Ka-band or FSO link is used. Then the latency can be expressed as (\ref{Eq:latency}) \cite{Hu2022_SatNetOps}:

\begin{equation}\label{Eq:latency}
 T_D =  \sum^{n_h}_{i=1}{\left( \frac{d(i)}{c} + \frac{M}{r(i)} + \eta + \varepsilon \right)}
\end{equation}
where $n_h$ is the total hop count, and $c$ is the speed of light. 

\subsection{Resilience Measure}
A new resilience measure $\mathcal{R}$ is defined in (\ref{Eq:resilience}) with the consideration of reliability factor, where the hop count is considered as it has a negative impact on path reliability over multiple hops as found in our previous work \cite{Hu2022_SatNetOps}.

\begin{equation}\label{Eq:resilience}
\mathcal{R}=\frac{\sum_{t=1}^{T}\mathcal{Q}(g(i),t)}{n_h \cdot T}
\end{equation}
where $g(i)$ is the mapping function to provide the global ID of $s_u(i)$ as SSN over the entire multi-layer network, ${G(t)}$ denotes the satellites accessible by SC at time $t$. $\mathcal{Q}(i,t)$ is an indicator function denoting if there is a successful path from $s_u(i)$ as SSN to DSN on $S$ at $t$, where $t =\{1,2,..., T\}$, and $T \in  \mathbb{Z}^{+}$ is the total time instants of a SatNetOps mission. 

\section{Proposed Cross-Layer Descent Approach}
With the system model described above, we can discuss the proposed CLD approach in this section.

\subsection{Roles of Satellite Nodes}
To differentiate the responsibilities and actions among satellite nodes in an MLN SatNetOps scheme, we define the satellite roles as follows:
\begin{itemize}
    \item Source GS (SGS): the GS sending messages for SatNetOps. An SGS can be viewed as a SatNetOps Center (SC).
    \item Source Satellite Node (SSN): the immediate satellite accessed by SGS. SSN is to be determined through the access opportunities to SGS at time $t$.
    \item Destination Satellite Node (DSN): the target satellite node.    
    \item Target-layer Source Satellite Node (TSN): the source satellite node at the layer where DSN is at.
    \item Intermediate Satellite Node (ISN): an intermediate satellite node (SN) on a path segment between TSN and DSN. 
\end{itemize}

\subsection{The Proposed Approach}

With the defined roles of nodes, we can further examine the principles of path determination between SSN and DSN, where \textit{cross-layer} means two layers are used, where \textit{intra-layer} means within the same layer.

\begin{lemma}
An overall optimal path during a timeframe $\Delta T$ of an MLN scheme between SGS and DSN is determined by (1) the optimal routes for every cross-layer path segment, and (2) the optimal route for every intra-layer path segment. %

\end{lemma}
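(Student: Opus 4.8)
The plan is to prove the lemma by an \emph{exchange} (cut-and-paste) argument, exploiting the additive structure of the path cost already recorded in (\ref{Eq:path_length}) and (\ref{Eq:latency}). First I would fix the setting: over a timeframe $\Delta T$ the constellation geometry, and hence the induced ISL/GSL graph on $S$ together with the SGS, is treated as static, so ``optimal path'' means a path from SGS to DSN of minimum cost, where the cost is additive over hops --- e.g.\ the length $D$ of (\ref{Eq:path_length}), or the latency $T_D$ of (\ref{Eq:latency}) in which $\eta$, $\varepsilon$ are per-hop constants and $M/r(i)$ depends only on the link of hop $i$. Any such path decomposes uniquely into consecutive path segments $P_1,\dots,P_M$ delimited by breakpoints $v_0=\text{SGS},\,v_1,\dots,v_M=\text{DSN}$ that mark the role nodes (SSN, TSN, DSN) and the layer changes; each $P_m$ is either a cross-layer segment (endpoints on different layers) or an intra-layer segment (endpoints on the same layer), and by (\ref{Eq:path_length}) the total cost is $\sum_{m=1}^{M} c_m$ with $c_m$ the cost of $P_m$.

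The core step is then the following. Let $P^* = P_1\Vert\cdots\Vert P_M$ be an overall optimal path with the breakpoints $v_0,\dots,v_M$ it induces. Suppose some segment $P_m$ were \emph{not} an optimal route of its type between $v_{m-1}$ and $v_m$; then there is a route $P_m'$ between the same two endpoints, of the same type (cross-layer or intra-layer, hence respecting the same adjacency constraints encoded by $\mathcal{A}_u$, $\mathcal{A}'_{u-1}$, $\mathcal{A}'_{u+1}$), with strictly smaller cost $c_m' < c_m$. Splicing $P_m'$ in for $P_m$ gives $P_1\Vert\cdots\Vert P_{m-1}\Vert P_m'\Vert P_{m+1}\Vert\cdots\Vert P_M$, which is still a feasible SGS-to-DSN path --- the endpoints match at $v_{m-1}$ and $v_m$ and no other segment is disturbed --- but has strictly smaller total cost, contradicting the optimality of $P^*$. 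Hence every segment of $P^*$ is segment-optimal, which is exactly claims (1) and (2). Conversely, concatenating any family of segment-optimal routes over the breakpoints $v_0,\dots,v_M$ produces a feasible path whose cost equals the sum of the optimal per-segment costs, so it attains the optimum; thus the overall optimal path \emph{is} determined by the optimal cross-layer and intra-layer segment routes. This is also what lets CLD split the global route computation into independent per-segment shortest-path problems once the breakpoints (in particular the TSN) are chosen.

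The main obstacle I anticipate is not the exchange logic itself but making the segment decomposition precise enough that ``same type'' is preserved under splicing: one must check that replacing an intra-layer segment cannot invalidate the cross-layer transitions at its two endpoints (and symmetrically), and that the breakpoints are a well-defined function of the path. A secondary point to state carefully is additivity/monotonicity of the objective: the argument goes through verbatim for $D$ and for $T_D$ since both are sums of per-hop terms, and more generally for any cost that is non-decreasing in the per-segment costs (so it also covers the hop count appearing in $\mathcal{R}$ and reasonable latency--resilience combinations); if $\Delta T$ were long enough for the topology to change within a segment, the static-graph premise --- and hence the lemma --- would have to be re-applied on each sub-interval.
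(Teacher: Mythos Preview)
Your exchange (cut-and-paste) argument is correct and in fact more rigorous than the paper's own proof, but it takes a genuinely different route. The paper does \emph{not} argue by contradiction from additivity of the cost; instead it fixes the concrete three-segment decomposition SGS--SSN, SSN--TSN, TSN--DSN induced by the role nodes, attaches a temporal bookkeeping $T_0,T_1,T_2$ to the successive segment determinations so that $\Delta T=\sum_k T_k$, and then exhaustively treats the degenerate cases where SSN, TSN, DSN coincide (setting the corresponding $T_k$ to zero). In other words, the paper's proof is a structural/case analysis that certifies the decomposition and its boundary cases, whereas you abstract to generic segments $P_1,\dots,P_M$ and invoke the additivity in (\ref{Eq:path_length})--(\ref{Eq:latency}) to get segment-wise optimality by splicing. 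Your approach buys mathematical cleanliness and immediate generality to any hop-additive or monotone cost (you rightly note this covers $D$, $T_D$, and the hop count in $\mathcal{R}$); the paper's approach buys a direct link to the operational role nodes and the $\Delta T$ timing that the CLD algorithm actually uses, plus the explicit handling of the SSN$=$TSN and TSN$=$DSN corner cases that matter for Alg.~\ref{Alg:cld_scheme}. The obstacle you flag --- that the breakpoints must be well defined so splicing preserves segment type --- is exactly what the paper sidesteps by hard-wiring the breakpoints to the role nodes rather than deriving them from the path.
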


\begin{proof}
    The overall path consists of the path segments SGS-SSN, SSN-TSN, and TSN-DSN. Let us first examine these segments separately in a general case, where SSN, TSN, and DSN are not the same node. Then, SGS starts the route determination at time $t$, and SGS-SSN is determined by the availability of SSN to SGS at time $t+T_0$. The SSN-TSN path segment is determined by the cross-layer routes at time $t+T_0+T_1$, where one or more layers may be used. The TSN-DSN is determined by the intra-layer path from TSN to DSN at $t + T_0+T_1+T_2$. The entire process takes within $\Delta T$ = $\sum_{}^{}{T_k}$. In special cases where SSN, TSN and DSN have overlapped roles, we can derive the two cases. In the first case that SSN and TSN are the same node, SGS-TSN and TSN-DSN are used where $T_1=0$; under this case, a subtle case is when SSN, TSN and DSN are the same node, then SGS-DSN is used where $T_1=T_2=0$. In the second case that TSN and DSN are the same node, SGS-SSN and SSN-DSN are used with one or more layers where $T_2=0$. Summarizing these cases makes the proof.
\end{proof}

\begin{prop}
    In the case that the cross-layer path segment is used, if the intra-layer route is fixed, to determine an optimal cross-layer SSN-TSN route, $SSN=s_{u+1}(i)$ and the optimal $TSN=s_{u}(j)$, where  $s_{u}(j) \in \mathcal{A}'_{u}(i)$ and $sgn(\theta_1) = sgn(\theta_2) $, where $\theta_1$ and $\theta_2$ are directional angles from $s_{u}(j)$ and $DSN$, respectively, to the SSN-SC segment, and $sgn()$ is the sign function. 
    
\end{prop}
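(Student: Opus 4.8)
The plan is to reduce the statement to a single-segment minimization via Lemma~1 and~(\ref{Eq:path_length}), and then to prove the two stated conditions --- membership in $\mathcal{A}'_u(i)$ and the sign agreement --- by a geometric exchange argument. Since the cross-layer segment under discussion is $SSN$-$TSN$, equation~(\ref{Eq:path_length}) lets the $SSN$-to-$DSN$ contribution to $D$ be written as $d(SSN,TSN)+D_{\mathrm{intra}}(TSN,DSN)$, where $D_{\mathrm{intra}}(\cdot,\cdot)$ is the length of the fixed-rule intra-layer route at layer $u$; by~(\ref{Eq:latency}) the corresponding latency is this length rescaled plus per-hop constants, so with $SSN=s_{u+1}(i)$ and $DSN$ fixed, ``optimal'' means minimizing $d(SSN,TSN)+D_{\mathrm{intra}}(TSN,DSN)$, plus the hop-count penalty of the intra-layer route, over the choice of $TSN$.

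First I would pin down $TSN\in\mathcal{A}'_u(i)$. A descent from layer $u+1$ to layer $u$ must use a cross-layer ISL, and by the link assumptions of the system model a cross-layer ISL out of $s_{u+1}(i)$ terminates on a node of $\mathcal{A}'_u(i)$. Hence any layer-$u$ node outside $\mathcal{A}'_u(i)$ is reachable from $SSN$ only through some $w\in\mathcal{A}'_u(i)$ followed by intra-layer hops; relabelling $w$ as $TSN$ folds those hops into the intra-layer route without increasing total length. So restricting the search to $\mathcal{A}'_u(i)$ costs nothing, and this set is exactly the one realizing the shortest single cross-layer hop out of $s_{u+1}(i)$.

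Next, for the sign condition, I would fix and orient the reference line $\ell$ through $SSN$ and $SC$ (the uplink segment), so that $\theta_1$ and $\theta_2$ are the signed directional angles to $\ell$ of $s_u(j)$ and of $DSN$. Suppose for contradiction that the optimal $TSN=s_u(j)$ has $sgn(\theta_1)\neq sgn(\theta_2)$, i.e.\ $TSN$ lies on the side of $\ell$ opposite to $DSN$. I would then produce a neighbour $s_u(j')\in\mathcal{A}'_u(i)$ on $DSN$'s side of $\ell$ with $d(SSN,s_u(j'))+D_{\mathrm{intra}}(s_u(j'),DSN) < d(SSN,s_u(j))+D_{\mathrm{intra}}(s_u(j),DSN)$. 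Two facts drive this: (i) the cross-layer hop lengths from $s_{u+1}(i)$ to the members of $\mathcal{A}'_u(i)$ are (near-)equal, being slant ranges to the directly-accessible footprint below $SSN$, so the first terms differ negligibly; and (ii) the intra-layer route length at layer $u$ is non-increasing as its start node is moved angularly toward $DSN$ across $\ell$, because an opposite-side start forces the route to recross $\ell$ before reaching $DSN$, and a triangle-inequality comparison on the orbital shell removes that detour. Combining (i) and (ii) gives the strict inequality, contradicting optimality, so the optimal $TSN$ satisfies $sgn(\theta_1)=sgn(\theta_2)$. Finally I would dispatch the degenerate configurations --- $\theta_2=0$ ($DSN$ on $\ell$), $\theta_1=0$, the collapsed-role cases of Lemma~1 ($SSN$, $TSN$, $DSN$ partially identified), and the case where $\mathcal{A}'_u(i)$ has no same-side member --- in each of which the stated condition holds vacuously or is met by the neighbour nearest $\ell$.

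The main obstacle is fact~(ii): making ``monotone toward $DSN$'' rigorous for a discrete set of satellites on a curved shell, and certifying that the induced decrease in the intra-layer term (and its hop count) strictly dominates the small change in the cross-layer hop term. This needs a careful model of the intra-layer routing metric --- for instance, showing that the hop-count or geodesic distance on the $\sigma_u$-periodic grid of layer $u$ is monotone under angular displacement toward $DSN$ --- and attention at the boundary between the ``same side'' and ``opposite side'' regimes, where the two candidate neighbours are nearly symmetric about $\ell$.
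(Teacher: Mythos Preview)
Your plan is sound and, frankly, considerably more rigorous than the paper's own argument. The paper's proof is essentially definitional: it invokes Lemma~1 to decouple the cross-layer segment from the intra-layer one, then asserts that the relevant geometric requirement is that $TSN$ lie ``on the same plane with $DSN$ on the NED coordinate of $SSN$,'' writes down explicit formulas $\theta_1=\arccos(x_1/|x_3|)$ and $\theta_2=\arccos(x_2/|x_3|)$ with $x_1=\overrightarrow{SSN\,TSN}$, $x_2=\overrightarrow{SSN\,DSN}$, $x_3=\overrightarrow{SC\,SSN}$, and concludes by naming the admissible subset $\mathcal{A}''_u(i)\subset\mathcal{A}'_u(i)$. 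There is no explicit optimality argument, no exchange step, and no treatment of the membership condition $TSN\in\mathcal{A}'_u(i)$ as something to be derived rather than assumed.

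Your route differs in that you actually set up the minimization of $d(SSN,TSN)+D_{\mathrm{intra}}(TSN,DSN)$ and argue both conditions: the membership via a relabelling/folding observation, and the sign agreement via a contradiction-and-exchange argument resting on your facts (i) and (ii). This buys a genuine justification of why the same-side choice is optimal rather than merely admissible. The cost is exactly the obstacle you flag: fact~(ii) demands a monotonicity property of the discrete intra-layer metric on the orbital shell, which the paper never states or proves and which is not obviously true for an arbitrary fixed intra-layer routing rule. The paper sidesteps this entirely by treating the cross-layer choice as independent of the intra-layer segment (per Lemma~1) and simply characterizing the geometric side condition; your version is the honest one, but be aware that the paper does not supply the lemma you would need to close the gap.
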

\begin{proof}
From Lemma1 the an optimal cross-layer route is dependent on the path SSN-TSN but independent of the intra-layer path segment. Suppose $s_u(j)$ is a TSN, we need to first ensure it is on the same plane with DSN on the NED coordinate of SSN. This requirement can be guaranteed by the same sign of the directional angles expressed in the trigonometric form, i.e., $\theta_1 = \arccos(\frac{x_1}{|x_3|})$ and $\theta_2 = \arccos(\frac{x_2}{|x_3|})$ , where $x_1=\overrightarrow{\rm SSN~TSN}$, $x_2=\overrightarrow{\rm SSN~DSN}$, $x_3=\overrightarrow{\rm SC~SSN}$. The set of satellite nodes meeting the requirement is $\mathcal{A}''_{u}(i) \subset \mathcal{A}'_{u}(i)$, where $s_u(j) \in \mathcal{A}''_{u}(i)$. This makes the proof. 
\end{proof}

Due to the small size of a TC message following TC-SDLP, which lead to a small value of $\Delta T$, during when the entire network structure can be considered \hl{quasi-static}. The route selection through the cross-layer path segment can be determined by the slant range and path distance among intra-layer and cross-layer nodes, as shown in Fig. \ref{fig:path_illustration} with \hl{Lemma 1}. In addition, Proposition 1 can be demonstrated in the snapshot shown in Fig. \ref{fig:path_illustration}, where $s_2(1)$ is an SSN, $s_1(1)$ is TSN, and $s_1(3)$ is a DSN. $s_1(1)$ and $s_1(2)$ are ISNs and $s_1(1)$ can only communicate with $s_1(2)$. In the occurrences when the SC cannot access $s_1(1)$ but can access $s_2(1)$, then $s_2(1)$ becomes the SSN. The cross-layer SSN-TSN route selection from Layer 2 to Layer 1 is determined by the slant range between the SSN and neighbouring nodes, where the closest one is $s_1(1)$, and we can see $\theta_1$ and $\theta_2$ have the same sign. Then, the remainder of the inter-layer path at Layer 1 from TSN, ISN to DSN. Suppose there is another neighbouring node of SSN, $s_1'(1)$, located on the ``left'' side of the SGS-SSN plane, then its $\theta_1 < 0$, while $\theta_2 > 0$, which will not be considered as a TSN.

From \hl{Proposition 1}, we can derive the following corollary:
\begin{corollary}
If TSN and DSN are the same node, $|\theta_1| = |\theta_2|$. If TSN and DSN are not the same, there are two cases: (a) when DSN $\in \mathcal{A}'_{u}(i)$, $|\theta_1| > |\theta_2|$; (b) when DSN $\notin \mathcal{A}'_{u}(i)$, $|\theta_1| < |\theta_2|$.
\end{corollary}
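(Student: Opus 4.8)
The plan is to reduce the corollary to a comparison of the \emph{magnitudes} $|\theta_1|,|\theta_2|$ and then to settle each case from the same planar geometry already used to prove Proposition 1. By Proposition 1 the optimal $\mathrm{TSN}=s_u(j)$ satisfies $\operatorname{sgn}(\theta_1)=\operatorname{sgn}(\theta_2)$, so $\theta_1$ and $\theta_2$ always lie on the same side of the $\mathrm{SSN}$--$\mathrm{SC}$ baseline; it therefore suffices to compare how far off that baseline $\mathrm{TSN}$ and $\mathrm{DSN}$ appear from $\mathrm{SSN}$, i.e. to compare $|\theta_1|$ with $|\theta_2|$.

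The equality case is immediate: if $\mathrm{TSN}=\mathrm{DSN}$ then $x_1=\overrightarrow{\mathrm{SSN}\,\mathrm{TSN}}=\overrightarrow{\mathrm{SSN}\,\mathrm{DSN}}=x_2$, and since $\theta_1$ and $\theta_2$ are given by the same expression in $(x_1,x_3)$ and $(x_2,x_3)$, we get $\theta_1=\theta_2$, hence $|\theta_1|=|\theta_2|$.

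For $\mathrm{TSN}\neq\mathrm{DSN}$ I would first fix the geometric picture in the NED frame at $\mathrm{SSN}$: both $\mathrm{TSN}$ and $\mathrm{DSN}$ lie on the lower shell $S_u$ (a sphere of fixed radius $R_u$), $\mathrm{SSN}=s_{u+1}(i)$ sits at radius $R_{u+1}>R_u$, and $\mathcal{A}'_u(i)$ is exactly the patch of $S_u$ visible from $\mathrm{SSN}$ above the minimum elevation $\omega_0$ — an angular window about the point of $S_u$ nearest to $\mathrm{SSN}$, whose half-width is a monotone function of $\omega_0$ and of $R_u/R_{u+1}$. The content of the corollary is then that, on the side fixed by Proposition 1, $|\theta(\cdot)|$ is strictly monotone in a lower-shell satellite's angular offset from that nearest point over the relevant range, together with the fact that the optimal-$\mathrm{TSN}$ rule of Proposition 1 pins $\mathrm{TSN}$ to a definite place relative to this window (the admissible node of smallest slant range on $\mathrm{DSN}$'s side). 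Granting this, case (b) is the statement that $\mathrm{DSN}\notin\mathcal{A}'_u(i)$ forces $\mathrm{DSN}$ outside the window while $\mathrm{TSN}$ lies inside it, whence $|\theta_1|<|\theta_2|$; and case (a) is the statement that when $\mathrm{DSN}\in\mathcal{A}'_u(i)$ but $\mathrm{DSN}\neq\mathrm{TSN}$, $\mathrm{DSN}$ is a same-side admissible node distinct from the one chosen as $\mathrm{TSN}$, so its offset — and hence $|\theta_2|$ — is controlled against $|\theta_1|$ in the direction claimed.

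The step I expect to be the real obstacle is precisely this monotonicity bookkeeping. The angle subtended at $\mathrm{SSN}$ by a point of $S_u$ is \emph{not} globally monotone in the offset — it turns around near $\mathrm{SSN}$'s local horizon — and the exact meaning of ``optimal $\mathrm{TSN}$'' (smallest slant range, versus smallest total $\mathrm{SSN}$--$\mathrm{DSN}$ length, and what the optimum becomes when the direct cross-layer link $\mathrm{SSN}\to\mathrm{DSN}$ is unavailable) is what fixes the \emph{direction} of the inequalities in (a) and (b). So the work to be done is to identify the operating regime — the shell radii, the admissible range of $\omega_0$, and the routing objective — under which both the selected $\mathrm{TSN}$ and the window boundary fall on the monotone branch; once that is in place the three (in)equalities follow by comparing angular offsets, together with the trivial coincidence case handled above.
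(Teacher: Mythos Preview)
Your proposal handles the equality case exactly as the paper does: when $\mathrm{TSN}=\mathrm{DSN}$, the two direction vectors coincide, hence $\theta_1=\theta_2$.

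For the two inequality cases, however, you take a substantially more elaborate geometric route than the paper. The paper's own proof is a terse case analysis: for case (a) it argues that if $\mathrm{DSN}\in\mathcal{A}'_u(i)$ then the direct link $\mathrm{SSN}$--$\mathrm{DSN}$ ``should be used as the shortest path,'' and from this routing observation it asserts $|\theta_1|>|\theta_2|$; for case (b) it simply states $|\theta_1|<|\theta_2|$ without further argument. There is no monotonicity analysis, no discussion of the local horizon, and no explicit dependence on shell radii or $\omega_0$ --- the paper treats the inequalities as essentially immediate once the accessibility condition is stated.

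Your approach is more careful in spirit: you correctly identify that the angle subtended at $\mathrm{SSN}$ is not globally monotone in the angular offset along the lower shell, and that the direction of each inequality ultimately depends on where the optimal-$\mathrm{TSN}$ rule pins $\mathrm{TSN}$ relative to the visibility window. These are genuine subtleties that the paper's proof simply glosses over. What you gain is a principled geometric framework that could be made fully rigorous; what you lose is brevity, and you leave the monotonicity bookkeeping as an acknowledged obstacle rather than discharging it. The paper, by contrast, trades rigor for concision and leans on the routing semantics (``shortest path'') to carry case (a), giving no formal geometric justification at all for case (b).
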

\begin{proof}
The proof is based on analyzing all possible cases. When TSN and DSN are the same node, $\overrightarrow{SSN~TSN}$ and $\overrightarrow{SSN~DSN}$ are the same segment, so $\theta_1 = \theta_2$. In the case when TSN and SSN are not the same node, if DSN is accessible by SSN (where TSN already meets this condition), i.e., DSN $\in \mathcal{A}'_{u}(i)$, then SSN-DSN should be used as the shortest path which result in $|\theta_1| > |\theta_2|$; if DSN is not accessible by SSN, i.e.,  DSN $\notin \mathcal{A}'_{u}(i)$, $|\theta_1| < |\theta_2|$.
\end{proof}

\begin{figure}[!h]
    \centering
    \includegraphics[width=0.8\linewidth]{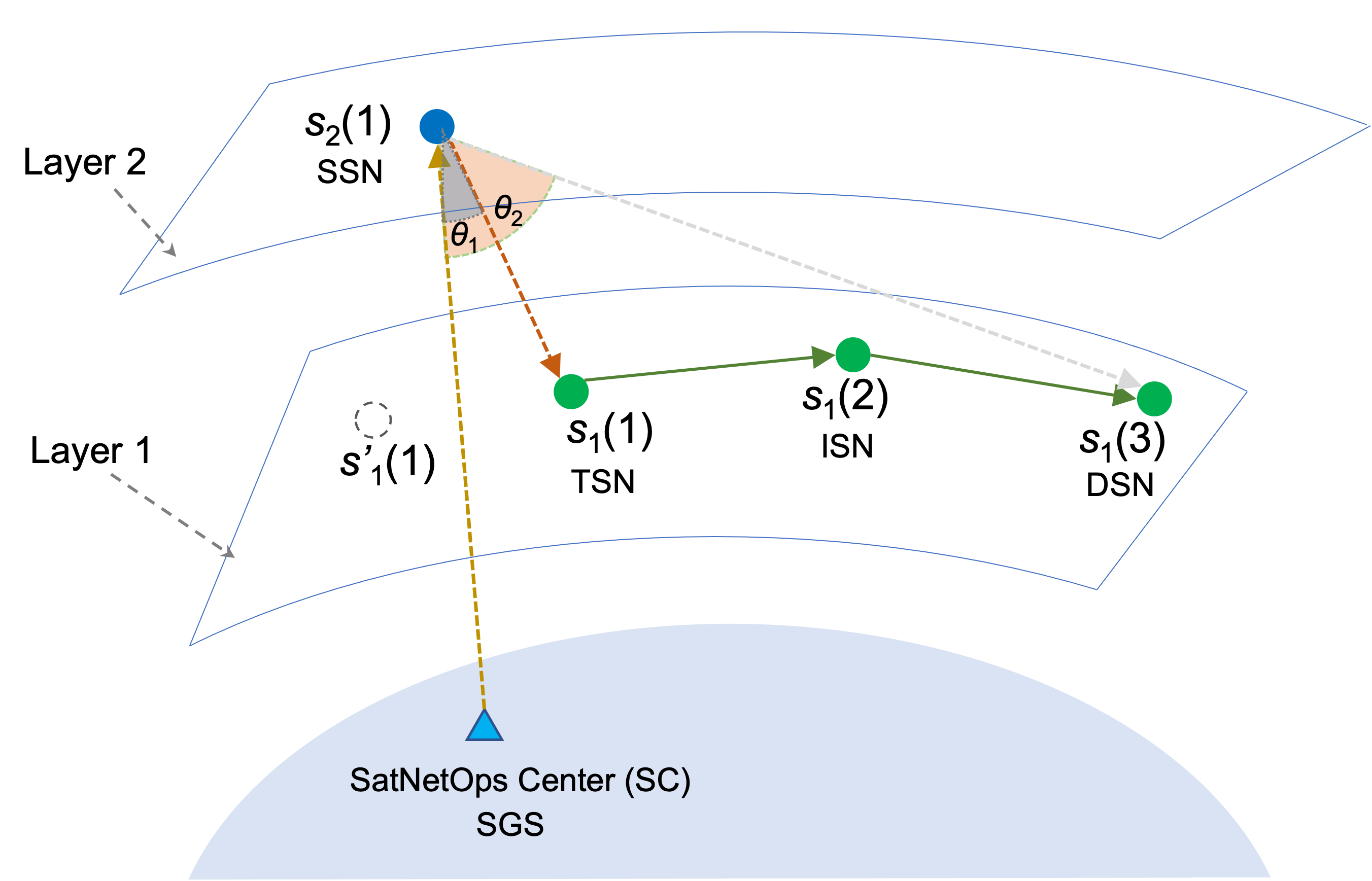}
    \caption{Geometry of the scenario snapshot for the CLD-based approach}
    \label{fig:path_illustration}
    \vspace{-5pt}
\end{figure}

Based on the discussion above, the CLD approach is described in Alg. \ref{Alg:cld_scheme}, where it consumes four input parameters and produces a cross-layer route vector in the return value. For clarity, we design a few custom helper functions in Alg. \ref{Alg:cld_scheme} with descriptive function names. For example, $\Call{IsAccessibleBySc}{ }$ determines if a satellite has an access to SC at time $t$, and $\Call{GetClosestSatToSc}{ }$ obtains the closes satellite to the SC at a layer. $SSN$ and $TSN$ can be determined dynamically through invoking $\Call{GetClosestSatToSc}{ }$ and $\Call{GetBestUnderLayerSat}{ }$, where the slant range is used to determine the best satellite in these functions. $\Call{GetIntraLayerRoute}{ }$ can calculate the route at the same layer based on the source and destination node, and $\Call{GetCrossLayerRoute}{ }$ calculates the cross-layer route based on \hl{Proposition 1} and \hl{Corollary 1}. The CLD scheme remains flexible to use different intra-layer routing schemes where the quasi-static status does alleviate the dynamic scheduling for changing routes. In the evaluation, we employ a regular routing scheme for $\Call{GetIntraLayerRoute}{ }$ where messages are forwarded through intra-plane neighbouring nodes first and then across the inter-plane nodes to the DSN.

Through the proposed approach, an SC can contact any target LEO satellite with the relay from current and upper layers. Based on (\ref{Eq:resilience}), for CLD approach, $\mathcal{Q}$ is defined in (\ref{Eq:Qindicator}):

\begin{equation}\label{Eq:Qindicator}
\begin{cases}
  \mathcal{Q}(g(i),t) = 0 & \text{for }|G(t)| = 0\\    
  \mathcal{Q}(g(i),t) = 1 & \text{for }|G(t)| > 0   
\end{cases}
\end{equation}
where 
$G(t)$ is the set of accessible satellites by SC at time $t$. 

\begin{algorithm}
\small
    \caption{CLD Approach}
    \label{Alg:cld_scheme}
    \begin{algorithmic}[1] 
        \Procedure{ Prepare}{$S$, $SC$, $n_u$, $dsnId$, $t$}
            \State $\rho \gets SC$
            \State $u \gets \Call{GetLayerIndexBySatId}{dsnId}$ \Comment{Set the starting layer index}
            \State $u_l(n_u) \gets \{0,...\}$  \Comment{Initialize a 1-D array of $n_u$ layers}
            \State $S_a \gets null $ \Comment{Initialize a list of sats accessible by SC}
            \For{ $S' \gets S(u)$}
                \State $i \gets \sum_{j=0}^{u-1}{n_{s_j} + 1}$
                \Comment{Set the initial sat ID}
                \For{$s \gets s_u(i),~i \leq n_{s_u}$}
                    \If{$\Call{IsAccessibleBySc}{\rho, s, t} == \text{TRUE}$}
                    \State $S_a \gets s_u(i)$ \Comment{Add $s_u(i)$ to $S_a$}
                    \State $u_l(u) \gets 1$
                    \EndIf
                    \State $i \gets i + 1$
                \EndFor
            \State $u \gets u + 1$
            \EndFor
            \State $u_{min} = \Call{GetMinLayerIndex}{u_l}$
            \If{$u_{min} == k$}
            \State SSN = \Call{GetClosestSatToSc}{$S_a$, $\rho$, $u_{min}$}
            \State $R1 \gets$ \Call{GetIntraLayerRoute}{$SSN$, $dsnId$}
            \Else
            \State $u_{ul} \gets$ \Call{GetUpperLayerIndex}{$u_l$}
            \State $SSN \gets$  \Call{GetClosestSatToSc}{$S_a$, $\rho$, $u_{ul}$}
            \State $TSN \gets$ \Call{GetBestUnderLayerSat}{$S_{u_{ul}}$, $SSN$}
            \State $R2 \gets$ \Call{GetCrossLayerRoute}{$SSN$, $TSN$}
            \State $R3 \gets$ \Call{GetIntraLayerRoute}{$TSN$, $dsnId$}
            \EndIf
        \State Return $\{R1, R2, R3\}$
        \EndProcedure
    \end{algorithmic}
\end{algorithm}

\section{Performance Evaluation}

\begin{table}[ht]
\centering
\setlength{\tabcolsep}{0.5em} 
\renewcommand{\arraystretch}{0.95}
\setlength{\tabcolsep}{0.8pt}
\caption{Key Simulation Parameters}
\label{Tbl:parameters}
\begin{tabular}{l l l}
\toprule
{\bfseries Parameter} & {\bfseries Value} & {\bfseries Notes}\\ 
\midrule
$n_{u}$ & 4 & Number of layers \\
$\{n_{S_1},n_{S_2},n_{S_2},n_{S_4}\}$ & $\{78,720,20,3\}$ & Num. of satellites in layers\\
$(lat, lng)$ & (51.0447, -114.0719) & SC coordinate\\
$gsMinElev$ & 25$^\circ$ & Min. elevation of SC\\
$\omega_0$ & 10$^\circ$ & Satellite min. elevation\\
$r_{FSO1}$ & 1.8 Gbps & FSO ground-space link rate\\
$r_{FSO2}$ & 10 Gbps & FSO ISL link rate\\
$r_{RF}$ & 324 Mbps \cite{Emily2018} & Ka-band RF link rate \\
$r_{HY}$ & 619.2 Mbps & RF/FSO1 hybrid link rate\\
$M$ & 1024 B & TC frame size\\
$T$ & 500 & 24hr-mission time samples\\
$T_{start}$ & 2022-09-01 01:00:00 & Mission start time (UTC) \\
$T_{stop}$ & 2022-09-02 01:00:00 & Mission stop time (UTC) \\
$T_{sample}$ & \hl{3600 s} & Sample time \\
$\varepsilon$ & 100 $\mu$s & Avg. processing delay\\
$\eta$ & 100 $\mu$s & Avg. queuing delay\\
Configuration I & $\{r_{RF}\}$ & RF space/ground links\\
Configuration II & $\{r_{FSO1},r_{FSO_2}\}$ & FSO space/ground links\\
Configuration III & $\{r_{HY}, r_{FSO_1}, r_{FSO_2}\}$ & Hybrid ground-space\\
& &  and FSO space-space links\\

\bottomrule
\end{tabular}
\vspace{-10pt}
\end{table}

\begin{figure}[!t]
\centering
{\includegraphics[width=\linewidth]{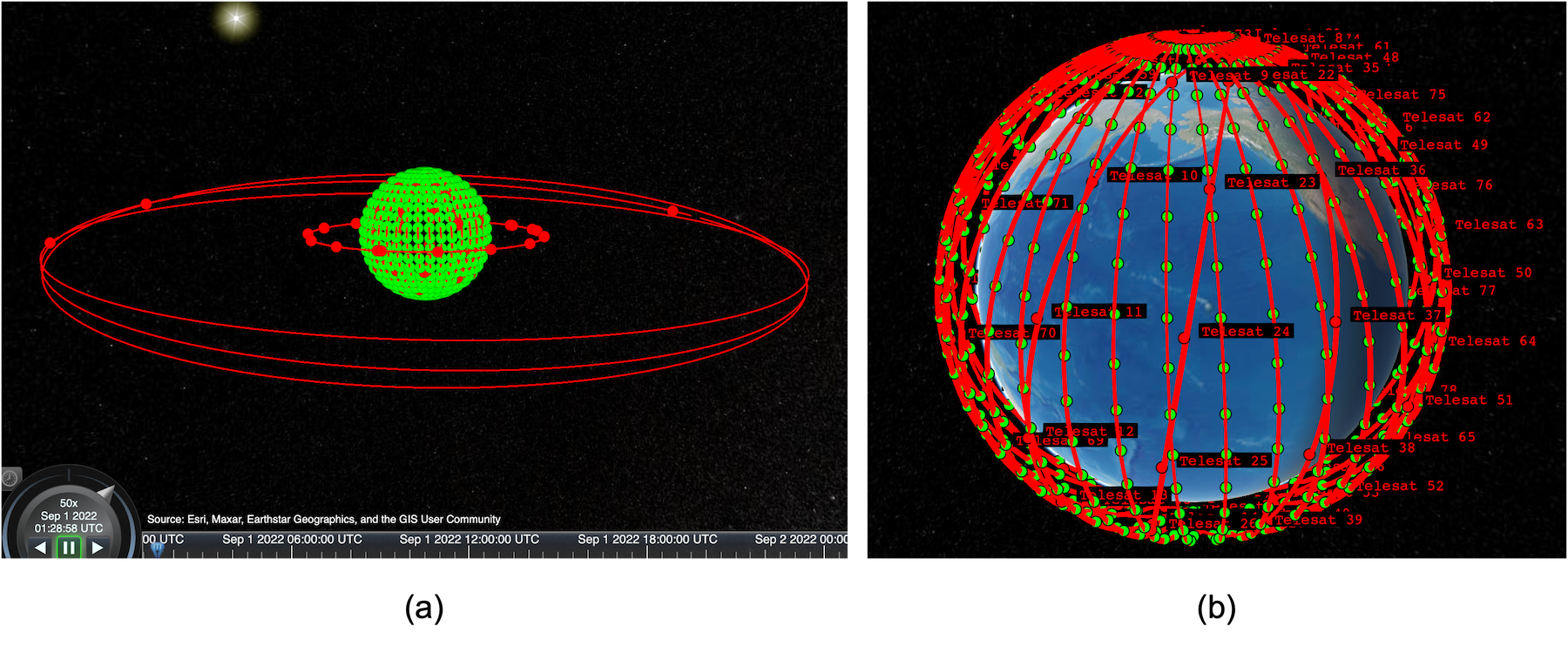}%
}
\caption{Satellite scenario showing (a) overall layers and (b) two LEO layers, where green and red dots represent OneWeb and Telesat Polar orbital satellites, respectively.}
\vspace{-10pt}
\label{Fig:satellite_scenario}
\end{figure}

\begin{figure}[!t]
\centering
\includegraphics[width=0.8\linewidth]{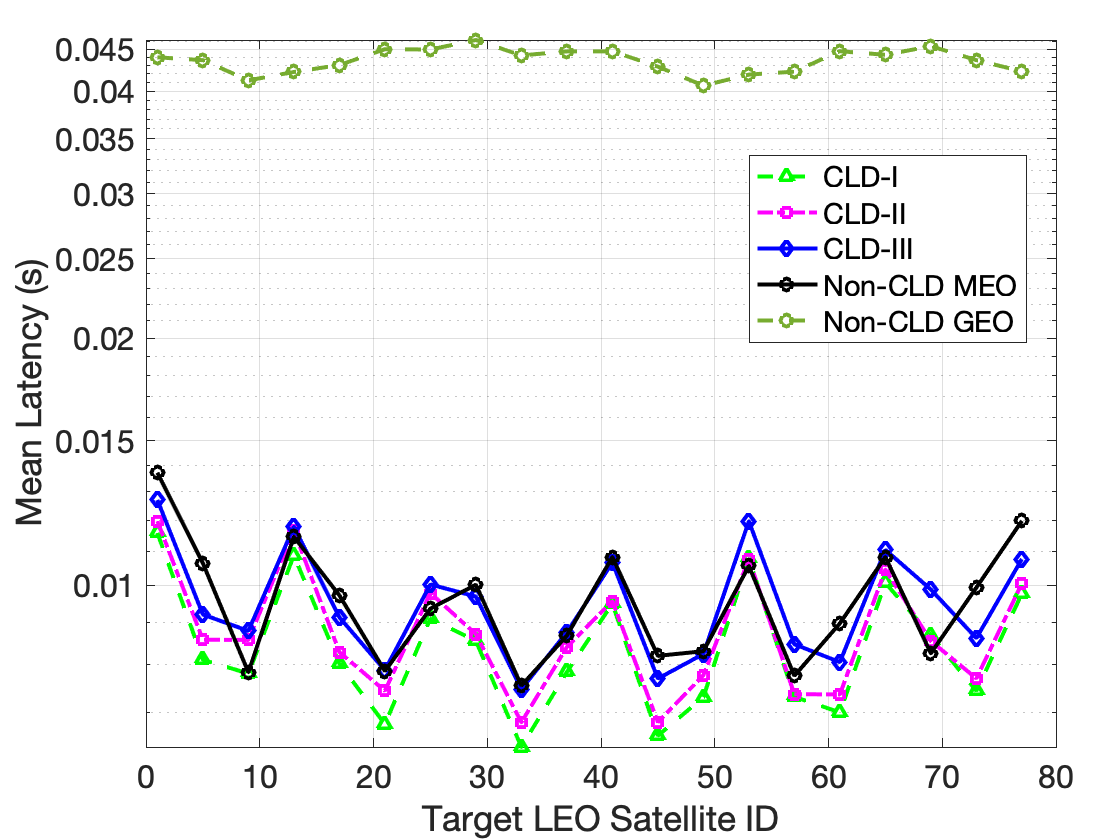}
\caption{Mean latency vs target LEO satellite ID in Configuration I}
\label{Fig:mean_latency_scenarios}
\vspace{-15pt}
\end{figure}

\begin{figure}[!t]
\centering
\includegraphics[width=0.8\linewidth]{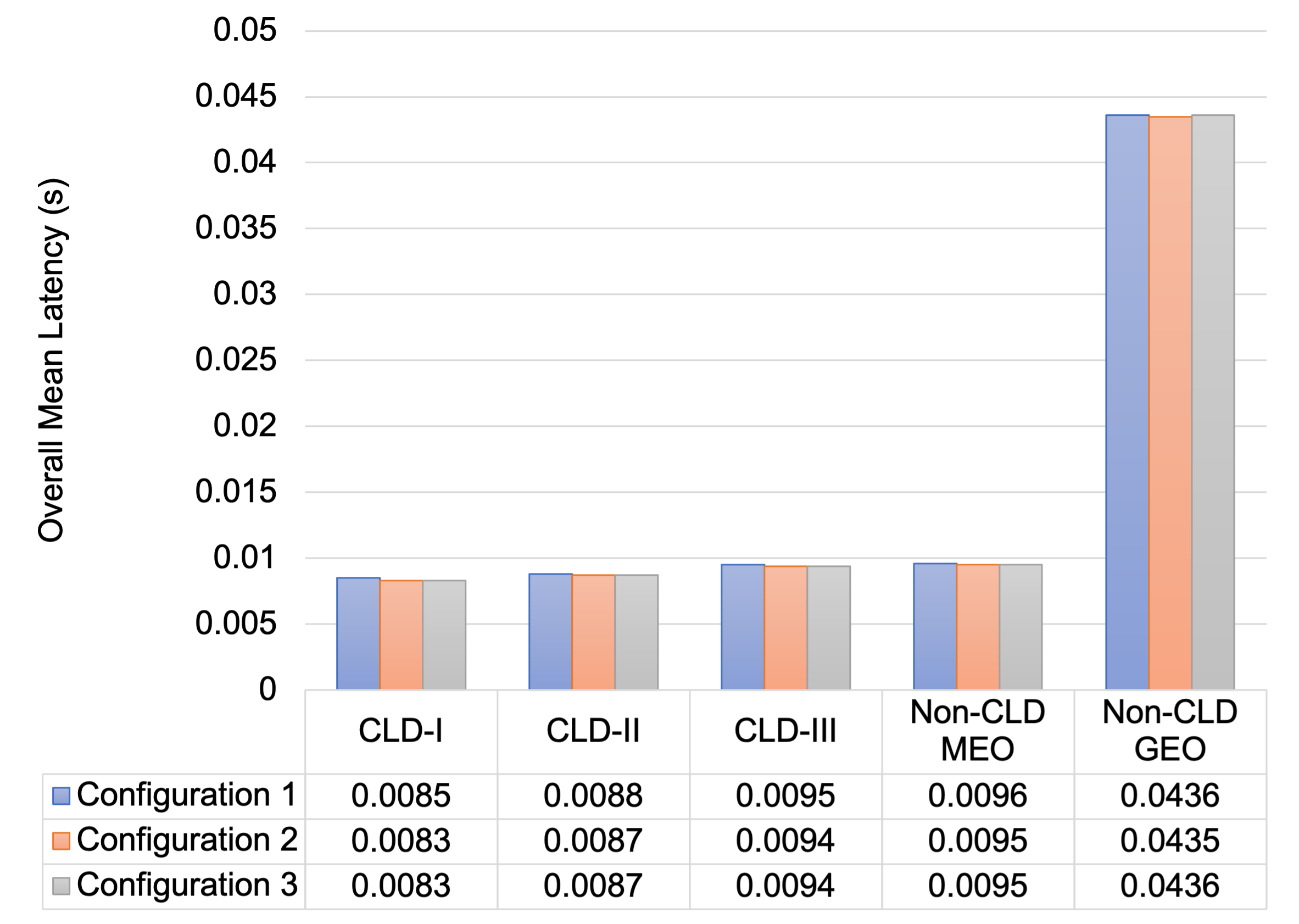}
\caption{Overall mean latency in Configurations I-III}
\label{Fig:overall_mean_latency}
\vspace{-15pt}
\end{figure}

\begin{figure}[!t]
\centering
\includegraphics[width=0.72\linewidth]{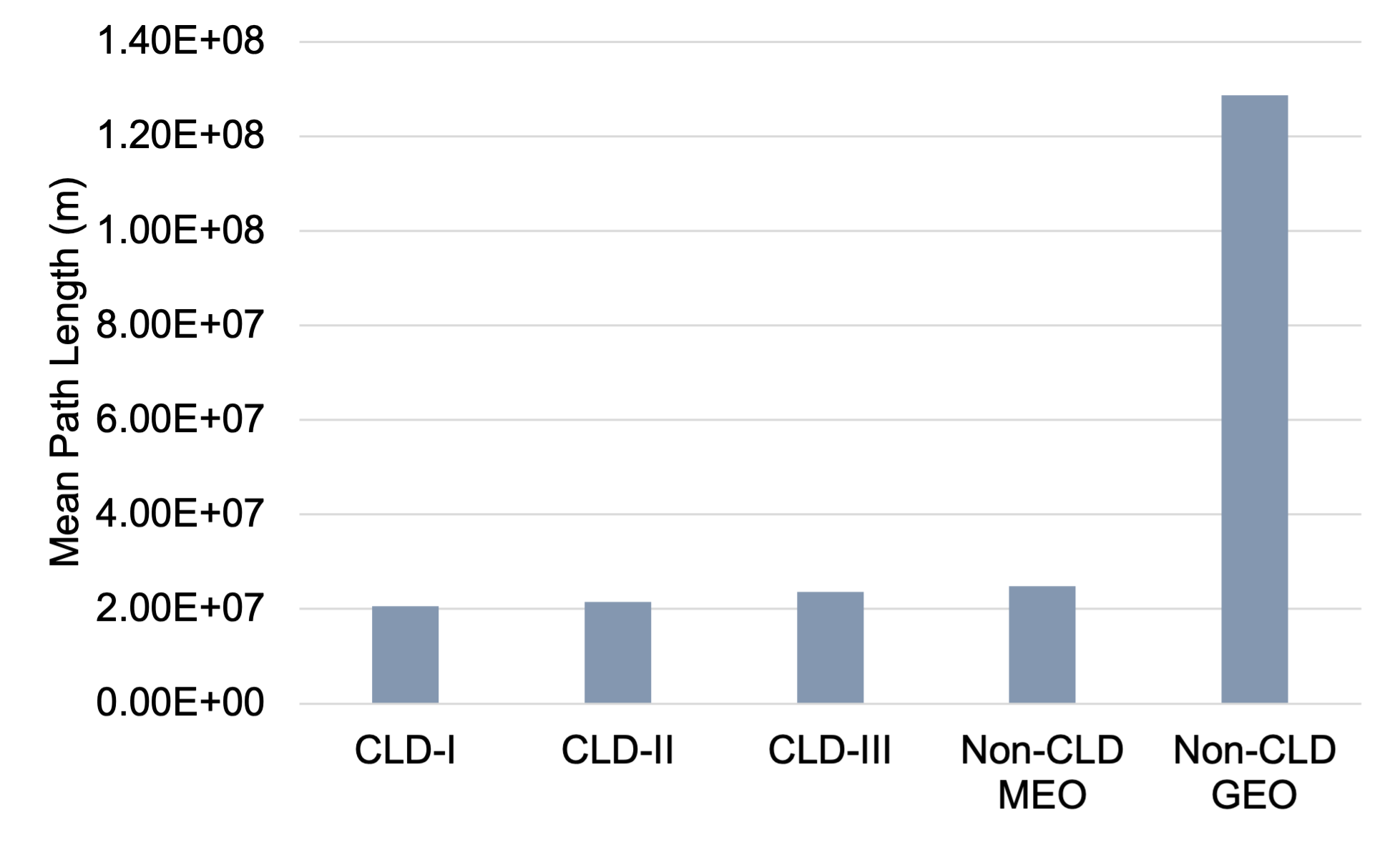}
\caption{Overall mean path length}
\label{Fig:overall_mean_path_len}
\vspace{-15pt}
\end{figure}

\begin{figure}[!t]
\centering
\includegraphics[width=0.72\linewidth]{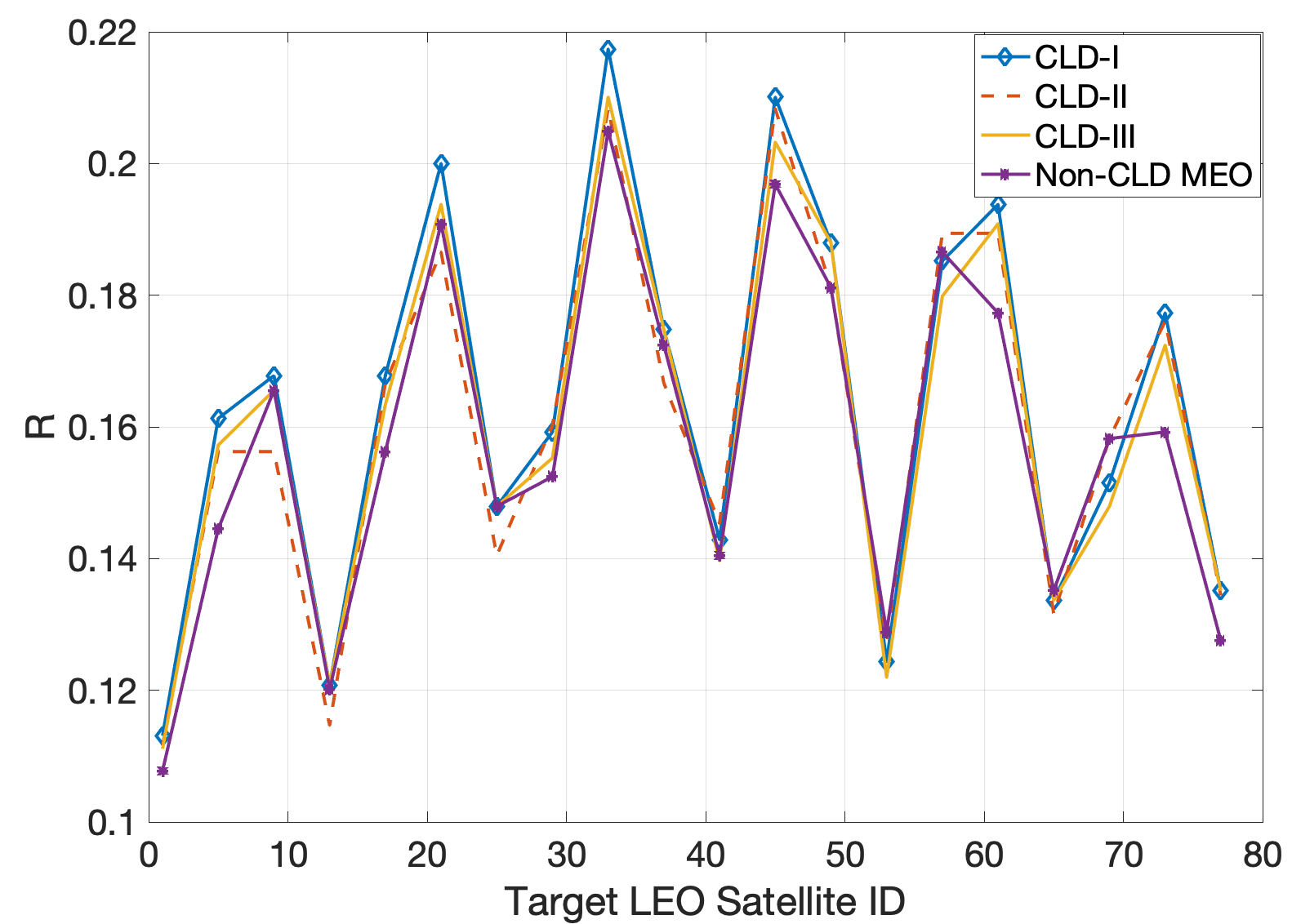}
\caption{Resilience performance}
\label{Fig:reslience_performance}
\vspace{-15pt}
\end{figure}

The MATLAB satellite communications toolbox is employed to validate the proposed schemes to achieve accuracy considering system parameters and configurations in simulations.
We consider typical commercial satellite constellations for $S$ in the satellite scenario as shown in Fig. \ref{Fig:satellite_scenario}. The scenario consists of two LEO satellite constellations (i.e., 78 Telesat Polar orbital and 720 OneWeb satellites), one MEO constellation (i.e., 20 SES O3b MEO satellites), and one GEO constellation (i.e., 3 Inmarsat-4 GEO satellites). The publicly available TLE files of O3b and Inmarsat-4 satellites are used, while the Telesat and OneWeb are based on the orbital propagator. All satellites are assigned global IDs from 1 to 821. The satellite mission parameters are shown in Table \ref{Tbl:parameters} based on our assumptions made in Section III. The target LEO satellites are in the Telesat Polar orbital shell as the first layer. To have good coverage of different target satellites across the orbital planes, we iterate the target satellite from 1 to 78 for every 4 satellites. To consider typical RF/FSO links used on space and ground entities, we consider three configurations, as shown in Table \ref{Tbl:parameters}, where a hybrid ground-space RF/FSO link with 20\% usage of RF is adopted in Configuration III.

Here we evaluate the proposed approach in comparison to the existing MLN-based SatNetOps schemes collectively referred to as ``Non-CLD MEO'' \cite{Hu2022_LATINCOM} and ``Non-CLD GEO'' \cite{Hu2022_SatNetOps}. We develop three schemes of the CLD approach to reflect the different scenarios when a variable number of layers is present: CLD-I considers all layers, CLD-II only uses the target LEO layer and MEO layer, while CLD-III only uses the target LEO layer and GEO layer. ``Non-CLD MEO'', where a target LEO layer and an MEO layer are used and ``Non-CLD GEO'', where only a GEO layer is used to contact a target LEO satellite.

Fig. \ref{Fig:mean_latency_scenarios} shows the latency performance per target satellite in Configuration I, where we can see the proposed CLD schemes have lower latency than non-CLD schemes. Compared to CLD-I, the ``Non-CLD GEO'' scheme has the worst latency performance by 5.13, 5.24, and 5.25 times in Configurations I-III, respectively. CLD-I can reach a latency as low as 6.35 ms for Configuration I (where the DSN ID is 33), where Ka-band links with $r_{RF}=384$ Mbps are used for all space-space and ground-space links. For Configuration II, where all FSO links are considered, the latency of CLD-I can be 6.22 ms. As the values in Y-axis for Configurations I-III are not easy to differentiate in the plots similar to Fig. \ref{Fig:mean_latency_scenarios}. Thus, Fig. \ref{Fig:overall_mean_latency} is plot to show the overall mean latency performance for each configuration. We can see that Configuration I has the longest latency than other configurations for each scheme. The ``non-CLD GEO'' scheme has the longest latency for all configurations. We can see CLD-I has the shortest latency, followed by CLD-II, CLD-III, and the ``non-CLD MEO'' scheme. The results in Configurations II and III are close as the variable $r$ on the ground-space link has a slight effect on the overall latency. The results indicate that using LEO satellite layers adjacent to the target layer creates shorter latency. It also shows if fewer layers are available, such as in the cases for CLD-II and CLD-III, the CLD approach still provides good timing performance.

The overall mean path length can help further explain the latency performance of different schemes over target LEO satellites. From Fig. \ref{Fig:overall_mean_path_len}, we can see that non-CLD schemes have the longest path length than CLD schemes. CLD-I has a mean path length of $2.06 \times 10^7$ m, while the traditional ``non-CLD GEO'' scheme has the longest path length of $1.29 \times 10^8$ m, which is 6.24, 5.99, and 5.45 times longer than those of CLD-I, CLD-II, and CLD-III, respectively. The results are mainly due to the effective cross-layer route selection, which can reduce the overall path length in CLD-based schemes.

Based on (\ref{Eq:resilience}), the resilience performance per target satellite is shown in Fig. \ref{Fig:reslience_performance}, where we can see CLD-I has the best resilience, followed by CLD-II, and CLD-III, while the ``Non-CLD MEO'' scheme has the lowest resilience for most target satellites. On average, the value of $\mathcal{R}$ of CLD-I is 3.74\% better than that of the ``Non-CLD MEO'' scheme. The results show that the CLD approach can provide better timing performance with improved resilience. Due to the worst latency performance of ``Non-CLD GEO'', its resilience performance is not compared here. 

\section{Conclusion}
The proposed CLD approach is presented in the paper, which shows effectiveness in the simulations in terms of latency and resilience compared to non-CLD methods. Furthermore, this work has derived an MLN system model, principles and new measures that formulate a foundation for designing and analyzing the MLN-based SatNetOps schemes. There is much room left for future contributions. The effect of various inter-layer routes will be explored in our future work.

 \section*{Acknowledgment}
We acknowledge the support of the High-Throughput and Secure Networks Challenge program of National Research Council Canada, and the support of the Natural Sciences and Engineering Research Council of Canada (NSERC), [funding reference number RGPIN-2022-03364].

\bibliographystyle{IEEEtran}

\bibliography{./bibitems}






\vfill

\end{document}